\newtheorem{theorem}{Theorem}
\theoremstyle{definition}
\newtheorem{remark}{Remark}
\newtheorem{lemma}{Lemma}
\newtheorem{definition}{Definition}
\newcommand{\pare}{\eta}
\newcommand{\gdot}{g(.)}
\newcommand{\acce}{\mathcal{A}_{\pare}}
\newcommand{\bu}{\mathbf{u}}
\newcommand{\by}{\mathbf{y}}
\newcommand{\bn}{\mathbf{n}}
\newcommand{\mA}{\mathcal{A}}
\newcommand{\est}{\mathsf{est}}
\newcommand{\DC}{\mathsf{DC}}
\newcommand{\MSE}{\mathsf{MSE}}
\newcommand{\PA}{\mathsf{PA}}
\newcommand{\mB}{\mathcal{B}}
\newcommand{\bestu}{\mathsf{U}^*_{a,b}}
\newcommand{\bestetaquant}{\pare^*_{q}}
\newcommand{\bestetaalg}{\hat{\eta}^*_{\delta,\lambda}}
\newcommand{\bestetaalgnew}{\hat{e}^*_{\delta,\lambda}}
\newcommand{\han}[1]{{\color{blue}#1}}
\begin{document}
\title{Game of Coding With an Unknown Adversary} 


\author{%
  \IEEEauthorblockN{Hanzaleh Akbarinodehi}
  \IEEEauthorblockA{
                    University of Minnesota, Twin Cities\\
                    Minneapolis, MN, USA}
                    \and
  
  \IEEEauthorblockN{Parsa Moradi}
  \IEEEauthorblockA{
                    University of Minnesota, Twin Cities\\
                    Minneapolis, MN, USA}
  \and
  \IEEEauthorblockN{Mohammad Ali Maddah-Ali}
  \IEEEauthorblockA{
                    University of Minnesota, Twin Cities\\
                    Minneapolis, MN, USA}
}

\maketitle

\begin{abstract}
Motivated by emerging decentralized applications, the \emph{game of coding} framework has been recently introduced to address scenarios where the adversary's control over coded symbols surpasses the fundamental limits of traditional coding theory. 
Still, the reward mechanism available in decentralized systems, motivates the adversary to act rationally. While the decoder, as the data collector (DC), has an acceptance/rejection mechanism, followed by an estimation module, the adversary aims to maximize its utility, as an increasing function of (1) the chance of acceptance (to increase the reward), and (2) estimation error. On the other hand, the decoder also adjusts its acceptance rule to  maximize its own utility, as (1) an increasing function of the chance of acceptance (to keep the system functional), (2) decreasing function of the estimation error. Prior works within this framework rely on the assumption that the game is complete—that is, both the DC and the adversary are fully aware of each other’s utility functions. However, in practice, the decoder is often unaware of the utility of the adversary. 

To address this limitation, we develop an algorithm enabling the DC to commit to a strategy that achieves within the vicinity of the equilibrium,  without knowledge of the adversary's utility function. Our approach builds on an observation that at the equilibrium, the relationship between the probability of acceptance and the mean squared error (MSE) follows a predetermined curve independent of the specific utility functions of the players. By exploiting this invariant relationship, the DC can iteratively refine its strategy based on observable parameters, converging to a near-optimal solution. We provide theoretical guarantees on sample complexity and accuracy of the proposed scheme.

\end{abstract}

\section{Introduction}

Coding theory is vital for ensuring data reliability and integrity in communication, computing, and storage. However, a fundamental challenge is its reliance on trust assumptions for ensuring successful data recovery. In such scenarios, where computations or data are distributed across \( N \) nodes, some nodes, denoted by \( \mathcal{H} \), are honest and adhere to the protocol, while others, represented by \( \mathcal{T} \), are adversarial and deviate arbitrarily.
For error-free recovery, $|\mathcal{H}|$ must exceed $|\mathcal{T}|$. For example, in repetition coding, reliable recovery requires \( |\mathcal{H}| \geq |\mathcal{T}| + 1 \). For Reed-Solomon codes~\cite{SudanBook}, which encode \( K \) symbols into \( N \) coded symbols, recovery is guaranteed if \( |\mathcal{H}| \geq |\mathcal{T}| + K \). Similarly, in Lagrange codes~\cite{yu2019lagrange}, for a polynomial  of degree \( d \), recovery requires \( |\mathcal{H}| > |\mathcal{T}| + (K-1)d \).  Similar constraints apply in analog scenarios  \cite{roth2020analog, ZamirCoded,  jahani2018codedsketch,BACC}. 

However, in emerging applications such as decentralized (blockchain-based) machine learning (DeML),  the traditional trust assumptions of coding theory are often impractical.  In these contexts, due to the lack of centralized control and the inherently adversarial nature of the environment, trust becomes a scarce resource~\cite{sliwinski2019blockchains, han2021fact, gans2023zero}. Nonetheless, decentralized platforms introduce an incentive-driven dynamics that offer new opportunities to overcome these limitations. Specifically, these systems reward contributions as long as they remain functional (\emph{live}). This incentivizes adversaries to ensure that data is recoverable by the decoder, as the data collector (DC), rather than engaging in complete denial-of-service attacks. As a result, adversaries are driven to keep the system live while introducing estimation errors to maximize their own benefits. 

Motivated by this observation and focusing on applications such as DeML and oracles, the game of coding framework was introduced in \cite{nodehi2024game}. In this framework, the interaction between the DC and the adversary is modeled as a game, where both players aim to maximize their respective utility. These utility functions are based on two key factors: (1) the probability of the DC accepting the inputs and (2) the error incurred in estimating the original data when inputs are accepted. Initially, \cite{nodehi2024game} examines a two-node system, depicted in Fig. \ref{fig:Two_node_model}, consisting of one honest node, one adversarial node, and a DC. The DC aims to estimate a random variable \(\mathbf{u} \). However, the DC does not have direct access to \(\mathbf{u}\) and relies on the two nodes for an estimation.
The honest node sends \(\mathbf{y}_h = \mathbf{u} + \mathbf{n}_h\) to the DC, where \(- \Delta \leq \mathbf{n}_h \leq \Delta\) for some \(\Delta \in \mathbb{R}\). The adversary  sends \(\mathbf{y}_a = \mathbf{u} + \mathbf{n}_a\), where \(\mathbf{n}_a\) is an arbitrary noise independent of \(\mathbf{u}\). Upon receiving \((\mathbf{y}_1, \mathbf{y}_2)\), the DC decides whether to accept or reject the inputs. While the acceptance rule \(|\mathbf{y}_1 - \mathbf{y}_2| \leq 2\Delta\) seems reasonable, it may not align with the DC's objectives. Specifically, this narrow acceptance region might reject inputs that could provide an accurate enough estimate of \(\mathbf{u}\). To optimize its utility, the DC can expand the acceptance region to \(|\mathbf{y}_1 - \mathbf{y}_2| \leq \eta \Delta\), for some \(\eta \geq 2\). 
In \cite{nodehi2024game}, the equilibrium of the game under the Stackelberg model \cite{simaan1973stackelberg} is analyzed, where the DC acts as the leader, committing to a strategy \(\eta\), and the adversary as the follower responds with \(\mathbf{n}_a\). The optimal strategies for both are derived in \cite{nodehi2024game}.  Subsequently, \cite{akbari2024game} extends this analysis to scenarios involving $N>2$ nodes. It demonstrated that the adversary’s utility at equilibrium does not increase with more nodes. This property ensures \emph{Sybil resistance}, where adversarial strategies gain no benefit from having more nodes.

However, one of the main challenges in \cite{nodehi2024game} and  \cite{akbari2024game} is the assumption that the game is complete, meaning both the DC and the adversary knows each other’s utility functions. While it is reasonable to assume that the adversary, as a powerful player, is aware of the DC's utility function, the reverse assumption is often unrealistic. In practical scenarios, the DC typically lacks the knowledge of the adversary's utility function. 
This lack of perfect information, known as the {\emph game of incomplete information}, introduces uncertainty for the DC in choosing the optimal strategy.  
The incomplete Stackleberg games, where the follower's utility function is unknown to the leader,  have been extensively explored in the literature~\cite{conitzer2006computing,gan2023robust, letchford2009learning, blum2014learning, peng2019learning, sessa2020learning,balcan2015commitment,  haghtalab2022learning, dong2018strategic, kleinberg2003value}. The solutions can be divided into two categories:   In the first category,  no attempt is made to reduce this uncertainty, and the leader's strategy is chosen based on optimizing the expected payoff (in a Bayesian formulation~\cite{conitzer2006computing}), or the worst payoff (in a robust setting~\cite{gan2023robust}). In the second category,  the players engage in a sequential game, where the leader updates its belief about the follower's payoff as the game progresses~\cite{letchford2009learning, blum2014learning, peng2019learning, sessa2020learning,balcan2015commitment,  haghtalab2022learning, dong2018strategic, kleinberg2003value}.  However, due to the inherent complexity of the game of coding, even for the most straightforward scenarios, these approaches pose significant challenges. 

In this paper, we develop an algorithm that enables the DC to commit to a strategy that is sufficiently effective, even without knowledge of the adversary's utility function. At first glance, this may seem like an impossible task, as determining the optimal strategy typically requires knowledge of the adversary's best response to each strategy the DC might commit to. Without this information, it appears infeasible to identify the optimum strategy.  Still there is one approach. The DC can commit to a sequence of strategies, one at a time, for which naturally the adversary will react to it by choosing its best corresponding strategy. Then, the DC can estimate the probability of acceptance and the mean squared error (MSE) for each case, and compute its own utility function for each strategy, and over time converge to the best strategy. However, the challenge lies in the fact that estimating the MSE requires the DC to know the value of  \( \mathbf{u} \),  which is not available. 

To resolve this, we utilize an intriguing observation that for each strategy committed by the DC and the corresponding response chosen by the adversary, the relationship between the probability of acceptance and the MSE, as determined by the adversary’s response, follows a specific and known relation. Remarkably, this curve is \emph{independent} of the specific utility functions of both the DC and the adversary. By exploiting this invariant relationship, the DC can estimate its own utility based on observable parameters, enabling it to iteratively refine its strategy. This approach allows the DC to converge to a strategy that is close to the optimal solution in the absence of complete information about the adversary’s utility function.
This paper takes the first step toward addressing this challenge by presenting algorithms to resolve the issue, proving their correctness, and evaluating their sample complexity.

The rest of the paper is organized as follows. In Section \ref{Formal Problem Setting}, we present the formal problem formulation. In Section \ref{sce:Preliminaries}, we discuss the preliminaries and key insights from previous works. Section \ref{Main results} outlines the main results and provides clarifying examples to illustrate their implications. Finally, in Section \ref{Proofs of main theorems}, we present the proofs of the main results.

\textbf{Notation}: For random variables we use boldface letters.  
   For any set $\mathcal{S}$ and any function $f(.)$, the output of $\underset{x \in \mathcal{S}}{\arg\max} ~f (x)$ and $\underset{x \in \mathcal{S}}{\arg\min} ~f (x)$ is a set.
For any \( x \in \mathbb{R} \), \( \lceil x \rceil \) denotes the smallest integer greater than or equal to \( x \). Similarly, for any \( n \in \mathbb{N} \), \( [n] \) denotes the set of all integers from 1 to \( n \).

\section{Problem formulation}\label{Formal Problem Setting}
In this section we present the problem formulation. 
We consider a system consisting of two nodes and a DC. There is a random variable $\mathbf{u}$, uniformly distributed in $[-M,M]$, where $M \in \mathbb{R}$. The DC aims to estimate $\mathbf{u}$ but does not have direct access to it. Instead, it relies on the two nodes to estimate $\mathbf{u}$\footnote{For simplicity of exposition, in this conference paper we focus on system with two nodes. Based on the results of~\cite{akbari2024game}, the result of this paper can be directly used for system with more than two nodes.}. One of the nodes is honest, while the other is adversarial. The adversarial node is chosen uniformly at random, and the DC does not know which node is adversarial.
The honest node sends $\mathbf{y}_h$ to the DC, where $\mathbf{y}_h = \mathbf{u} + \mathbf{n}_h$. The probability density function (PDF) of $\mathbf{n}_h$, denoted by $f_{\mathbf{n}_h}$, is symmetric. Also, $\Pr(|\mathbf{n}_h| > \Delta) = 0$ for some $\Delta \in \mathbb{R}$.

The adversary sends $\mathbf{y}_a = \mathbf{u} + \mathbf{n}_a$ to the DC, where $\mathbf{n}_a$ is an arbitrary noise independent of $\mathbf{u}$. The PDF of $\mathbf{n}_a$, denoted by $\gdot$, is chosen by the adversary. The DC is unaware of this choice. We assume that the values of $M$ and the distribution of $\mathbf{n}_h$ are known to all, and $\Delta \ll M$.

\begin{figure}[t]
    \centering
\includegraphics[width=0.98\linewidth]{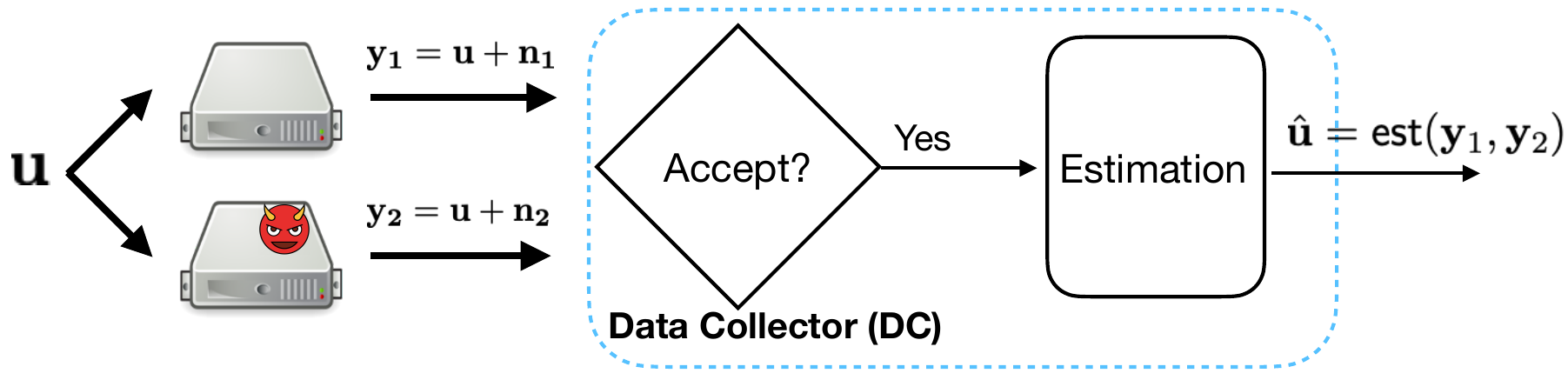}
\caption{
Game of Coding involves with one honest and one adversarial node providing noisy versions of $\mathbf{u}$
to the DC, which decides whether to accept the inputs and estimate $\mathbf{u}$. While the honest node's noise distribution is known, the adversary selects its noise to optimize its utility function in response to the DC's acceptance strategy. The DC also strategically commits to an acceptance rule, optimizing its utility function. This paper addresses practical scenarios where the DC is unaware of the adversary’s utility function.
}
\label{fig:Two_node_model}
\end{figure}

The DC receives \( \underline{\by} \triangleq \{\by_1, \by_2\} \) and evaluates the results through a two-step process (see Fig.~\ref{fig:Two_node_model}). 

\begin{enumerate}
    \item \textbf{Accept or Reject}: The DC \emph{accepts} $\underline{\by}$ if and only if $|\by_1 - \by_2| \leq \eta \Delta$ for some $\eta \geq 2$. We denote by $\acce$ the event that the inputs are accepted. Additionally, we define $\PA \left( \gdot, \pare \right) \triangleq \Pr(\acce)$.
    
    \item \textbf{Estimation}: If the inputs are accepted the DC outputs $\est(\underline{\by})$  
 as its estimate of $\bu$,  where $\est: \mathbb{R}^2 \to \mathbb{R}$. We define the cost of estimation as $\mathsf{MSE}\left(\est(.,.), \gdot, \pare\right) \triangleq \mathbb{E}\left[\big(\mathbf{u} - \est(\mathbf{\underline{y}})\big)^2 | \acce\right]$.
The DC chooses $\est$ to minimize $\MSE$. We define $ \est^*_{\pare,g} \triangleq \underset{\est: \mathbb{R}^2 \to \mathbb{R}}{\arg\min} ~\mathsf{MSE}\big(\est(.,.), \gdot, \pare\big)$,
 and also the 
 minimum mean square estimation error (MMSE) as $\mathsf{MMSE}\left(\gdot, \pare \right) \triangleq \mathsf{MSE}\left( \est^*_{\pare,g}(.,.), \gdot, \pare \right)$.

\end{enumerate}

We model the interaction between the DC and the adversary as a two-player game, where each player seeks to maximize their respective utility function. The utility functions for the DC and the adversary are defined as $\mathsf{U}_{\mathsf{DC}}\left( \gdot, \pare \right) \triangleq \mathsf{Q}_{\mathsf{DC}} (\mathsf{MSE}, \mathsf{PA})$, and $\mathsf{U}_{\mathsf{AD}}\left( \gdot, \pare \right) \triangleq Q_{\mathsf{AD}} (\mathsf{MSE}, \mathsf{PA})$, respectively.
The function $\mathsf{Q}_{\mathsf{DC}}: \mathbb{R}^2 \to \mathbb{R}$ is non-increasing with respect to its first argument and non-decreasing with respect to its second. Conversely, $Q_{\mathsf{AD}}: \mathbb{R}^2 \to \mathbb{R}$ is strictly increasing with respect to both arguments.
We assume that the adversary knows the both utility functions, whereas the DC is unaware of the adversary's utility function.

The DC selects its strategy from the set $\Lambda_{\mathsf{DC}} = \left\{ \pare \mid \pare \geq 2 \right\}$, and the adversary selects its strategy from the set $\Lambda_{\mathsf{AD}} = \left\{ \gdot \mid \gdot: \mathbb{R} \to \mathbb{R} \text{ is a valid PDF} \right\}$.

We aim to evaluate this game under \emph{Stackelberg} model, considering the DC as the leader and the adversary as the follower. More precisely, for each action \( \pare \) to which the DC commits, the adversary’s best response is defined as $\mathcal{B}^{\pare} \triangleq \underset{\gdot \in \Lambda_{\mathsf{AD}}}{\arg\max} ~ {\mathsf{U}}_\mathsf{AD}(\gdot, \pare)$.
Each element of \( \mathcal{B}^{\pare} \) provides the adversary with the same utility, but these responses may result in different utilities for the DC. We define  $\Bar{\mathcal{B}}^{\pare} \triangleq \underset{\gdot \in \mathcal{B}^{\pare}}{\arg\min} ~ {\mathsf{U}}_\mathsf{DC}(\gdot, \pare)$, and  
\begin{align}\label{eq:etastar}
    \mathsf{U}^* = \underset{\pare \in \Lambda_{\mathsf{DC}},~ g^*(.) \in \Bar{\mathcal{B}}^{\pare}}{\sup} ~ {\mathsf{U}}_\mathsf{DC}(g^*(.), \pare).
\end{align}
For any $\eta$, let
\begin{align}
    \mathsf{U}(\eta) &\triangleq  {\mathsf{U}}_\mathsf{DC}(g^*(.), \pare), \label{def:u_function} 
\end{align}
where $g^*(.) \in \Bar{\mathcal{B}}^{\pare}$.
We model the interaction between the DC and the adversary as a multi-round process. In each round \( t \geq 1 \), the DC commits to a value \( \eta_t \) as its strategy, and the adversary adjusts its strategy in response to this commitment. 
We assume that the adversary is myopic, meaning that at each round, it selects its strategy to maximize its immediate utility based on the DC's current commitment \( \eta_t \), without considering the long-term implications of its actions.
After \( T(\frac{1}{\delta}, \frac{1}{\lambda}) \) rounds, the DC finalizes its strategy by committing to \( \hat{\eta}_T \). The objective is to design an algorithm such that, for any \( \delta, \lambda > 0 \), there exists a polynomial \( T(\frac{1}{\delta}, \frac{1}{\lambda}) \), ensuring that:
\begin{align}
    \Pr \left(  \mathsf{U}^* - \mathsf{U}(\hat{\eta}_T) > \lambda \right) < \delta.
\end{align}

\section{Preliminaries}\label{sce:Preliminaries}
In this section, we present the key preliminaries and insights from \cite{nodehi2024game}, where the game of coding framework was first introduced. To enable the DC to commit to a strategy that achieves a utility sufficiently close to \(  \mathsf{U}^* \), denoted by $\eta^*$, the authors in \cite{nodehi2024game} proposed a novel optimization problem, defined as follows:

\begin{align}\label{C_definition}
 c_{\eta} (\alpha) \triangleq 
    \underset{\gdot \in \Lambda_{\mathsf{AD}}}{\max} \; \underset{\mathsf{PA} \left( \gdot, \pare \right) \geq \alpha}{\mathsf{MMSE}\left(\gdot, \pare \right)}.
\end{align}
Note that  $c_{\eta}(.)$, is independent of the players' utility functions. Interestingly, in \cite{nodehi2024game}, it was shown that $\eta^*$ can be derived using the function $c_{\eta}(.)$ through a two-dimensional optimization process outlined in Algorithm \ref{Alg:finding_eta}.
 The core idea behind the correctness of Algorithm \ref{Alg:finding_eta} is that for any $g^*(.) \in \mathcal{B}^{\pare}_{\mathsf{AD}}$, with $\alpha(\eta) \triangleq \mathsf{PA} \left( g^*(.), \pare \right)$, they showed  that $\mathsf{MMSE}(g^*(.), \pare) = c_{\eta}(\alpha(\eta))$, and thus we have
 \begin{align}\label{relation_u_q}
    \mathsf{U}(\eta) = \mathsf{Q}_\DC\Big(\alpha(\eta), c_{\eta}(\alpha(\eta))\Big).
\end{align}
 
 Next, to characterize $c_{\eta}(.)$, it was demonstrated in \cite{nodehi2024game} that for any $\pare \in \Lambda_{\mathsf{DC}}$, and $0 < \alpha \leq 1$, we have

\begin{align}\label{characterization of c}
    \frac{h^*_{\eta}(\alpha)}{4\alpha} - \frac{(\eta^2+4)(\eta+2)\Delta^3}{M} 
    \leq c_{\eta}(\alpha) 
    \leq \frac{h^*_{\eta}(\alpha)}{4\alpha},
\end{align}
where $h^*_{\eta}(.)$ represents the concave envelope of the function\footnote{We assume that the cumulative distribution function of $\bn_h$ is strictly increasing, which guarantees the existence of $k_{\eta}^{-1}(.)$ within $[-\Delta, \Delta]$.} 
$ h_{\eta}(q) \triangleq \nu_{\eta}(k_{\eta}^{-1} (q))$, $0 \leq q \leq 1$,  for $\nu_{\eta}(z) \triangleq \int_{z-\eta\Delta}^{\Delta} (x+z)^2f_{\bn_h}(x)\,dx$ and $k_{\eta}(z) \triangleq \int_{z-\eta\Delta}^{\Delta} f_{\bn_h}(x)  \,dx$, $(\eta-1)\Delta \leq z \leq  (\eta+1)\Delta$. Note that the approximation in \eqref{characterization of c} becomes tight as $\frac{(\eta\Delta)^3}{M} \rightarrow 0$. Consequently, we can assume $c_{\eta}(\alpha) \approx \frac{h^*_{\eta}(\alpha)}{4\alpha}$.

Note that in Algorithm \ref{Alg:finding_eta} the DC must know the utility function of the adversary. Specifically, one of the inputs to Algorithm \ref{Alg:finding_eta} is $Q_{\mathsf{AD}}(., .)$. This raises a fundamental question: what if the DC does not have knowledge of the adversary's utility function? In this paper we explore this question and find a solution for it.

\begin{algorithm}[t]
\caption{Finding the optimal Decision Region}
\label{Alg:finding_eta}
\begin{algorithmic}[1]
\State Inputs: $Q_{\mathsf{AD}}(., .), \mathsf{Q}_{\mathsf{DC}}(., .)$, $c_{\eta}(.)$ and output: $\hat{\eta}$

\State \textbf{Step 1:} Calculate the set $\mathcal{L}_{\eta} = \underset{0 < \alpha \leq 1 }{\arg\max} ~Q_{\mathsf{AD}}(c_{\eta} (\alpha), \alpha)$
\State \textbf{Step 2:} Calculate $\hat{\eta} = \underset{\pare \in \Lambda_{\mathsf{DC}}}{\arg\max} ~ \underset{\alpha \in \mathcal{L}_{\eta}}{\min} ~ \mathsf{Q}_{\mathsf{DC}} \left(c_{\eta} (\alpha), \alpha\right)$
\end{algorithmic}
\end{algorithm}

\section{Main Results}\label{Main results}
In this section we present the main results of this paper. First, consider the following definitions.
\begin{definition}
    A function \( f:\mathcal{D} \to \mathbb{R} \) is said to be \( L \)-Lipschitz if it is continues over the domain $\mathcal{D}$, and there exists a constant \( L > 0 \) such that, for any \( x, y \in \mathcal{D} \), $|f(x) - f(y)| \leq L |x - y|$.
\end{definition}
\begin{definition}\label{def:piecewise Lipschitz}
    A function \( f:[a, b] \to \mathbb{R} \) is \( (L,d) \)-piecewise Lipschitz \cite{leobacher2022exception}, if there exists points \( a = x_1 <  \dots < x_v = b \), for some \( v \in \mathbb{N} \), such that \( f \) is \( L \)-Lipschitz on each subinterval \( (x_i, x_{i+1}) \), and $x_{i+1} - x_i \geq d$, for \( i \in [v-1] \), $d \in \mathbb{R}$.
\end{definition}

For simplicity we assume that the DC chooses $\eta \in [a,b]$, where $2 \leq a \leq b$.
We define $\bestu \triangleq  \sup ~ \mathsf{U}(\eta)$, for $\pare \in [a, b]$.
At the outset, committing to a strategy that achieves a utility sufficiently close to \( \bestu \) without knowledge of the adversary's utility function might appear to be an impossible task.
 This challenge arises because, in Algorithm \ref{Alg:finding_eta}, the DC requires the adversary's utility function. However, in this paper, we introduce Algorithm \ref{Alg:best_eta_not_knowing_Uad}, which enables the DC to converge to an optimal strategy despite this limitation.

The key insight behind this algorithm lies in a remarkable result demonstrated in \cite{nodehi2024game}: for any $g^*(.) \in \mathcal{B}^{\pare}_{\mathsf{AD}}$, with $\alpha(\eta) = \mathsf{PA} \left( g^*(.), \pare \right)$, it holds that $\mathsf{MMSE}(g^*(.), \pare) = c_{\eta}(\alpha(\eta))$. This result implies that, for any value of $\eta$ committed by the DC, if the DC can estimate the value of $\PA$ corresponding to the adversary's selected noise distribution, it can also estimate the corresponding $\mathsf{MMSE}$. 
Consequently, without direct knowledge of the adversary's utility function, the DC can estimate its own utility by leveraging the estimated values of $\PA$ and $\mathsf{MSE}$. By evaluating this procedure across different values of $\eta$, the DC can identify and commit to a sufficiently effective strategy. In the following, we present Algorithm \ref{Alg:best_eta_not_knowing_Uad}, and prove its correctness in Theorem \ref{Thm:ETC}. 

\begin{theorem}\label{Thm:ETC} Assume that the function $\mathsf{U}(.)$, defined in \eqref{def:u_function}, is \( (L,d) \)-piecewise Lipschitz function. Additionally, for any $\eta \in [a,b]$ and $0 \leq \alpha \leq 1$, the function $\mathsf{Q}_\mathsf{DC}(\alpha, c_\eta(\alpha))$ is $\ell$-Lipschitz with respect to $\alpha$. For any $\delta, \lambda > 0$, and $\bestetaalg$, the output of Algorithm \ref{Alg:best_eta_not_knowing_Uad}, we have $\Pr ( \bestu - \mathsf{U}(\bestetaalg) > \lambda ) < \delta$.
\end{theorem}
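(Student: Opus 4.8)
The plan is to cast Algorithm \ref{Alg:best_eta_not_knowing_Uad} as an explore-then-commit procedure and to control the gap $\bestu - \mathsf{U}(\bestetaalg)$ by three additive sources of error: a discretization error from searching over a finite grid of $\eta$ rather than the whole interval $[a,b]$, a statistical error from estimating the acceptance probability, and the negligible modeling error in \eqref{characterization of c}. The conceptual engine is the invariant established in \cite{nodehi2024game}: for the adversary's best response $g^*(.)$ with $\alpha(\eta) = \PA(g^*(.),\eta)$, one has $\MMSE(g^*(.),\eta) = c_{\eta}(\alpha(\eta))$, so by \eqref{relation_u_q}, $\mathsf{U}(\eta) = \mathsf{Q}_\DC(\alpha(\eta), c_{\eta}(\alpha(\eta)))$. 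Since the DC knows its own $\mathsf{Q}_\DC$ and can compute $c_{\eta}(.)$ from the public data $(f_{\bn_h}, \Delta, M)$ via \eqref{characterization of c}, the only unknown on the right-hand side is $\alpha(\eta)$, which is the expectation of the observable indicator $\mathbbm{1}_{\acce}$. This reduces the whole problem to estimating a single scalar $\alpha(\eta)$ per committed strategy.

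First I would fix a uniform grid $\{\eta_1, \dots, \eta_{N}\} \subset [a,b]$ with spacing $s \le \min\{\lambda/(4L), d/2\}$. The constraint $s \le d/2$ is what lets the piecewise-Lipschitz structure be exploited: because every Lipschitz piece has width at least $d \ge 2s$, for any near-optimal point $\eta^{\star}$ (any point with $\mathsf{U}(\eta^{\star}) > \bestu - \lambda/4$) at least one of the two grid points bracketing $\eta^{\star}$ lies in the same Lipschitz piece as $\eta^{\star}$ and within distance $s$; call it $\eta_{j^{\star}}$. The $L$-Lipschitz bound on that piece then gives $\mathsf{U}(\eta_{j^{\star}}) \ge \mathsf{U}(\eta^{\star}) - Ls \ge \bestu - \lambda/4 - Ls$.

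Next I would analyze the exploration phase: the DC commits to each $\eta_j$ for $n$ rounds and forms $\hat\alpha(\eta_j)$, the empirical frequency of acceptance. Since the myopic adversary responds with some fixed $g^* \in \mB^{\eta_j}$ and $\mathbbm{1}_{\acce}$ is Bernoulli with mean $\alpha(\eta_j)$, Hoeffding's inequality gives $\Pr(|\hat\alpha(\eta_j) - \alpha(\eta_j)| > \epsilon) \le 2e^{-2n\epsilon^2}$. On the intersection of these events over all $N$ grid points (union bound, probability at least $1 - 2Ne^{-2n\epsilon^2}$), the $\ell$-Lipschitz assumption on $\alpha \mapsto \mathsf{Q}_\DC(\alpha, c_{\eta}(\alpha))$ yields $|\hat{\mathsf{U}}(\eta_j) - \mathsf{U}(\eta_j)| \le \ell\epsilon$ for every $j$, where $\hat{\mathsf{U}}(\eta_j)$ is the algorithm's plug-in estimate obtained by substituting $\hat\alpha(\eta_j)$. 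Then the usual argmax sandwich for $\bestetaalg = \arg\max_j \hat{\mathsf{U}}(\eta_j)$ gives $\mathsf{U}(\bestetaalg) \ge \hat{\mathsf{U}}(\bestetaalg) - \ell\epsilon \ge \hat{\mathsf{U}}(\eta_{j^{\star}}) - \ell\epsilon \ge \mathsf{U}(\eta_{j^{\star}}) - 2\ell\epsilon$. Chaining this with the previous display, $\bestu - \mathsf{U}(\bestetaalg) \le \lambda/4 + Ls + 2\ell\epsilon$. Choosing $Ls \le \lambda/4$ and $\epsilon = \lambda/(8\ell)$ makes the right-hand side at most $3\lambda/4 < \lambda$, and demanding $2Ne^{-2n\epsilon^2}\le\delta$ fixes $n \ge \tfrac{1}{2\epsilon^2}\ln(2N/\delta)$. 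Since $N = O((b-a)/s)$ is polynomial in $1/\lambda$, the total horizon $T = nN$ is polynomial in $1/\lambda$ and $\log(1/\delta)$, as required.

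I expect the main obstacle to be the discretization step over the piecewise-Lipschitz $\mathsf{U}(.)$: a naive grid can place every grid point across a discontinuity from the near-optimal $\eta^{\star}$, so the argument must use the lower bound $d$ on the piece widths to guarantee a grid point in the same piece, which is precisely why $s \le d/2$ is imposed. A secondary subtlety, which I would address explicitly, is that the adversary's best response set $\mB^{\eta_j}$ need not be a singleton; I would argue that the invariant $\MMSE = c_{\eta}(\alpha)$ together with the definition of $\mathsf{U}(\eta)$ through $\Bar{\mathcal{B}}^{\pare}$ makes the estimated utility consistent with the value being targeted regardless of which best response the myopic adversary plays, and that the bias $O(\Delta^3/M)$ from \eqref{characterization of c} can be absorbed into the slack of $\lambda$.
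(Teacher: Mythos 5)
Your proposal is correct and follows essentially the same route as the paper's proof: a discretization step that uses the grid spacing being smaller than the minimum piece width $d$ to place a grid point in the same Lipschitz piece as a near-optimal $\eta$ (the paper's Lemma~\ref{lemma:bound_forquant}), followed by Hoeffding plus a union bound over the grid and the $\ell$-Lipschitz transfer to $\hat{\mathsf{U}}$, and the standard argmax sandwich (the paper's Lemma~\ref{lemma:bound_for_alg_vs_quant} and the decomposition in \eqref{second_bound}). The only differences are cosmetic choices of constants and your explicit flagging of the $O(\Delta^3/M)$ bias and the non-uniqueness of the adversary's best response, which the paper treats implicitly via \eqref{relation_u_q}.
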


\begin{algorithm}[t]
\caption{Baseline Algorithm for Optimal Decision Region Estimation}

\label{Alg:best_eta_not_knowing_Uad}
\begin{algorithmic}[1]
\State \textbf{Inputs:} \( a, b, \delta, \lambda, \ell, L, d, c_{\eta}(.), \mathsf{Q}_{\mathsf{DC}}(.) \)
\State \textbf{Output:} \( \bestetaalg \)

\State Choose \( n > (b-a)\max \left\{ \frac{2L}{\lambda}, \frac{1}{d} \right\} \) and \( k > \frac{8\ell^2}{\lambda^2} \ln(\frac{2(n+1)}{\delta}) \), where \( n, k \in \mathbb{N} \).
\State Define \( \eta_i \triangleq a + \frac{(b-a)(i-1)}{n} \), for \( i \in [n+1] \).

\For{\( r = 1, \dots, k \)}
    \For{\( i \in [n+1] \)}
        \State Commit to \( \eta_i \) and observe the inputs.
    \EndFor
\EndFor

\State Let \( N(i) \) denote the number of accepted inputs for \( \eta_i \), for \( i \in [n+1] \).

\State Calculate \( \hat{\alpha}(\eta_i, k) = \frac{N(i)}{k} \), for \( i \in [n+1] \).

\State Compute:
\[
m = \underset{i \in [n+1]}{\arg\max}~ \hat{\mathsf{U}}(\eta_i),
\]
where \( \hat{\mathsf{U}}(\eta_i) = \mathsf{Q}_{\mathsf{DC}}(\hat{\alpha}(\eta_i, k), c_{\eta_i}(\hat{\alpha}(\eta_i, k))) \),  \( i \in [n+1] \).

\State \textbf{Output:} \( \bestetaalg = \eta_m \)
\end{algorithmic}
\end{algorithm}

\begin{remark}
    The problem formulation and Theorem \ref{Thm:ETC} in this paper are presented for the special case of $N = 2$ nodes. However, based on the results of the work in \cite{akbari2024game}, they can be readily extended to cases with $N > 2$ nodes.
\end{remark}

The runtime of Algorithm \ref{Alg:best_eta_not_knowing_Uad} is \( (n+1)k \), where \( n > (b-a)\max \left\{ \frac{2L}{\lambda}, \frac{1}{d} \right\} \), \( k > \frac{8\ell^2}{\lambda^2} \ln\left(\frac{2(n+1)}{\delta}\right) \), and \( n, k \in \mathbb{N} \). In this algorithm, the DC commits to \( n+1 \) equally spaced values of \( \eta \) in the interval \([a, b]\), defined as \( \eta_i = a + \frac{(b-a)(i-1)}{n} \), for \( i \in [n+1] \). For each \( \eta_i \), the DC performs \( k \) rounds, calculating \( N(i) \), the number of accepted inputs, and then estimates the acceptance probability \( \hat{\alpha}(\eta_i, k) = \frac{N(i)}{k} \). Using this, the DC computes the estimated utility \( \hat{\mathsf{U}}(\eta_i) = \mathsf{Q}_{\mathsf{DC}}(\hat{\alpha}(\eta_i, k), c_{\eta_i}(\hat{\alpha}(\eta_i, k))) \). Finally, the DC selects the candidate \( \eta_m \) that maximizes \( \hat{\mathsf{U}}(\eta_i) \) and outputs \( \bestetaalg = \eta_m \).

While Algorithm \ref{Alg:best_eta_not_knowing_Uad} provides a robust method for determining the optimal decision region, it can be inefficient in cases where the utility function \( \mathsf{U}(.) \) exhibits significant variations across \([a, b]\). More specifically, if some candidate values of \( \eta \) are unlikely to achieve the maximum utility, they can be dynamically eliminated during the execution of the algorithm. This approach forms the foundation of Algorithm \ref{alg:better_new_alg}.

Algorithm \ref{alg:better_new_alg} introduces an adaptive mechanism to eliminate suboptimal candidates. Initially, all \( n+1 \) candidates are included in the set \( [n+1] \). However, at each round \( r \), after committing to the remaining candidates and observing the inputs, the DC calculates the acceptance probability \( \hat{\alpha}(\eta_i, r) = \frac{N(i, r)}{r} \) for \( \eta_i \in [n+1] \setminus \mathcal{J} \), where \( \mathcal{J} \) is the set of eliminated candidates. The DC then computes the utility estimates \( \hat{\mathsf{U}}(\eta_i) \) for the remaining candidates and identifies the current best candidate \( \eta_m \). Using the confidence interval \( \epsilon_r = 2\ell \sqrt{\frac{\ln\left(\frac{4(n+1)}{\delta}\right)}{2r}} \), the DC eliminates any candidate \( \eta_i \) for which \( \hat{\mathsf{U}}(\eta_m) - \hat{\mathsf{U}}(\eta_i) > \epsilon_r \).

In the worst-case scenario, where \( \mathsf{U}(.) \) changes only slightly across \([a, b]\), the DC may not be able to eliminate many candidates, leading to a runtime similar to Algorithm \ref{Alg:best_eta_not_knowing_Uad}. However, in favorable cases, where \( \mathsf{U}(.) \) exhibits sharper variations, the DC can significantly reduce the number of candidates, thus improving runtime efficiency. 

Algorithm \ref{alg:better_new_alg} ensures that the final output \( \bestetaalgnew = \eta_m \) achieves a utility that is sufficiently close to the true optimal utility \( \bestu \) while significantly reducing computational overhead.
 In the following theorem, we establish the correctness and performance guarantees of this improved algorithm.

\begin{theorem}\label{Thm:ETC_new_better} Assume that the function $\mathsf{U}(.)$, defined in \eqref{def:u_function}, is \( (L,d) \)-piecewise Lipschitz function. Additionally, for any $\eta \in [a,b]$ and $0 \leq \alpha \leq 1$, the function $\mathsf{Q}_\mathsf{DC}(\alpha, c_\eta(\alpha))$ is $\ell$-Lipschitz with respect to $\alpha$. For any $\delta, \lambda > 0$, and $\bestetaalgnew$, the output of Algorithm \ref{alg:better_new_alg}, we have $\Pr ( \bestu - \mathsf{U}(\bestetaalgnew) > \lambda ) < \delta$.
\end{theorem}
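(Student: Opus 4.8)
The plan is to bound the two independent sources of error in $\bestu - \mathsf{U}(\bestetaalgnew)$ separately: the \emph{discretization} error incurred by restricting $\eta$ to the grid $\{\eta_i\}_{i\in[n+1]}$, and the \emph{statistical} error incurred by replacing each true acceptance probability $\alpha(\eta_i)=\PA(g^*(.),\eta_i)$ with its empirical estimate $\hat{\alpha}(\eta_i,r)=N(i,r)/r$. The crucial simplification, inherited from \eqref{relation_u_q}, is that $\mathsf{U}(\eta_i)=\mathsf{Q}_{\mathsf{DC}}(\alpha(\eta_i),c_{\eta_i}(\alpha(\eta_i)))$, so the plug-in quantity $\hat{\mathsf{U}}(\eta_i)=\mathsf{Q}_{\mathsf{DC}}(\hat{\alpha}(\eta_i,r),c_{\eta_i}(\hat{\alpha}(\eta_i,r)))$ differs from $\mathsf{U}(\eta_i)$ only through $\hat{\alpha}$. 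Invoking the $\ell$-Lipschitz hypothesis on $\alpha\mapsto\mathsf{Q}_{\mathsf{DC}}(\alpha,c_\eta(\alpha))$ gives the deterministic reduction $|\hat{\mathsf{U}}(\eta_i)-\mathsf{U}(\eta_i)|\le \ell\,|\hat{\alpha}(\eta_i,r)-\alpha(\eta_i)|$, which converts every utility-estimation question into a concentration question for a binomial proportion.

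For the discretization error I would reuse the argument behind Theorem \ref{Thm:ETC}. Because $\mathsf{U}(.)$ is $(L,d)$-piecewise Lipschitz and the grid spacing $(b-a)/n$ is strictly smaller than both $d$ and $\lambda/(2L)$ under the choice $n>(b-a)\max\{2L/\lambda,1/d\}$, every Lipschitz piece (of length at least $d$) contains a grid point, and in particular there is a grid index $j^{\star}$ lying in the same piece as a near-maximizer of $\mathsf{U}(.)$ and within distance $(b-a)/n$ of it; the $L$-Lipschitz bound on that piece yields $\mathsf{U}(\eta_{j^{\star}})\ge \bestu-\lambda/2$. It therefore suffices to show that Algorithm \ref{alg:better_new_alg} outputs some $\eta_m$ with $\mathsf{U}(\eta_m)\ge \mathsf{U}(\eta_{j^{\star}})-\lambda/2$.

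The heart of the proof is a \emph{successive-elimination} argument. Define the good event $\mathcal{E}$ on which, for every candidate $\eta_i$ and every round $r$ at which it is still active, $|\hat{\mathsf{U}}(\eta_i)-\mathsf{U}(\eta_i)|\le \epsilon_r/2$ with $\epsilon_r=2\ell\sqrt{\ln(4(n+1)/\delta)/(2r)}$. On $\mathcal{E}$ I would argue, first, that $\eta_{j^{\star}}$ is never eliminated: if at round $r$ the empirical leader is $\eta_m$, then using $\mathsf{U}(\eta_m)\le \mathsf{U}(\eta_{j^{\star}})$ (as $\eta_{j^{\star}}$ maximizes $\mathsf{U}$ over the grid), one gets $\hat{\mathsf{U}}(\eta_m)-\hat{\mathsf{U}}(\eta_{j^{\star}})\le (\mathsf{U}(\eta_m)+\epsilon_r/2)-(\mathsf{U}(\eta_{j^{\star}})-\epsilon_r/2)\le \epsilon_r$, so the test $\hat{\mathsf{U}}(\eta_m)-\hat{\mathsf{U}}(\eta_{j^{\star}})>\epsilon_r$ never triggers for $j^{\star}$. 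Second, since $\eta_{j^{\star}}$ survives to termination and the output $\eta_m$ is the empirical maximizer among survivors, $\mathsf{U}(\eta_m)\ge \hat{\mathsf{U}}(\eta_m)-\epsilon_r/2\ge \hat{\mathsf{U}}(\eta_{j^{\star}})-\epsilon_r/2\ge \mathsf{U}(\eta_{j^{\star}})-\epsilon_r$ at the terminal round $r$. Running until $\epsilon_r\le\lambda/2$ (equivalently $r\ge (8\ell^2/\lambda^2)\ln(4(n+1)/\delta)$) then gives $\mathsf{U}(\eta_m)\ge \bestu-\lambda$ on $\mathcal{E}$.

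It remains to show $\Pr(\mathcal{E}^c)<\delta$. For a fixed $(\eta_i,r)$, the Lipschitz reduction and Hoeffding's inequality for the proportion $\hat{\alpha}(\eta_i,r)$ give $\Pr(|\hat{\mathsf{U}}(\eta_i)-\mathsf{U}(\eta_i)|>\epsilon_r/2)\le 2\exp(-2r(\epsilon_r/(2\ell))^2)=\delta/(2(n+1))$, and a union bound over the $n+1$ candidates controls a single round. The main obstacle—and the only place where Algorithm \ref{alg:better_new_alg} is genuinely harder to analyze than Algorithm \ref{Alg:best_eta_not_knowing_Uad}—is that elimination decisions are made across \emph{all} rounds, so $\mathcal{E}$ must hold uniformly in $r$, and a naive union over rounds is not summable with this $\epsilon_r$. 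I expect to close this either by a peeling/maximal-inequality argument over dyadic blocks of rounds (Doob's inequality applied to the martingale $N(i,r)-r\alpha(\eta_i)$), upgrading the per-round bound to a time-uniform one at the cost of a constant or $\log\log$ factor absorbed into $\ln(4(n+1)/\delta)$, or by capping the horizon at the round where $\epsilon_r\le\lambda/2$ and union-bounding over that finite set with a refined confidence level. Once $\Pr(\mathcal{E}^c)<\delta$ is established, the claim follows from the inclusion $\{\bestu-\mathsf{U}(\bestetaalgnew)>\lambda\}\subseteq\mathcal{E}^c$.
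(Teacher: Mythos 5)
Your overall architecture matches the paper's: the discretization error is handled exactly as in Lemma~\ref{lemma:bound_forquant}, the Lipschitz hypothesis on $\alpha\mapsto\mathsf{Q}_{\mathsf{DC}}(\alpha,c_\eta(\alpha))$ together with \eqref{relation_u_q} reduces utility estimation to Hoeffding concentration of $\hat{\alpha}(\eta_i,r)$, and the elimination rule is analyzed by showing the best grid point survives. The packaging differs slightly: instead of a single good event $\mathcal{E}$ uniform over rounds, the paper defines per-round ``bad comparison'' events $\mathcal{G}_r^{(i,j)}$, shows $\Pr(\mathcal{G}_r^{(i,j)})\le\delta/(n+1)$ for fixed $(i,j,r)$ (Lemma~\ref{lemma:event_A_bound}), bounds $\Pr(\bestetaquant\in\mathcal{K})\le t\delta/(n+1)$ by a union over the $t$ realized eliminations (Lemma~\ref{bounding_bad_elimination}), and then, conditioned on survival, reuses the Theorem~\ref{Thm:ETC} argument on the $n+1-t$ survivors to get the complementary $(n+1-t)\delta/(n+1)$, so the two contributions sum to $\delta$. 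Your version would instead conclude directly from $\epsilon_k\le\lambda/2$ at the terminal round.

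The step you flag as the main obstacle---that the per-round failure probability $\delta/(2(n+1))$ is constant in $r$, so a naive union over all $k$ rounds gives $k\delta/2$ and is not summable---is a genuine issue, and it is worth noting that the paper does not actually resolve it. Lemma~\ref{bounding_bad_elimination} applies the fixed-index bound of Lemma~\ref{lemma:event_A_bound} to the events $\mathcal{G}_{r_j}^{(i_j,m_j)}$, but the elimination rounds $r_j$, the eliminated indices $i_j$, and the empirical leaders $m_j$ are all data-dependent random quantities; a rigorous union bound over all ways $\bestetaquant$ could be eliminated runs over every round and every possible leader, giving $k(n+1)\cdot\delta/(n+1)=k\delta$, which is exactly the blow-up you identify. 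So your proposal is not missing an idea the paper supplies; rather, to make either proof airtight one needs one of the repairs you sketch---a time-uniform (peeling/maximal-inequality) concentration bound for the proportions $\hat{\alpha}(\eta_i,r)$, or a round-dependent confidence width such as $\epsilon_r=2\ell\sqrt{\ln\left(c(n+1)r^2/\delta\right)/(2r)}$ whose failure probabilities are summable over $r$---at the cost of slightly worse constants in $n$ and $k$. One small additional point: your terminal bound $\epsilon_k\le\lambda/2$ requires $k\ge\frac{8\ell^2}{\lambda^2}\ln(4(n+1)/\delta)$, marginally larger than the $\frac{8\ell^2}{\lambda^2}\ln(2(n+1)/\delta)$ prescribed by Algorithm~\ref{alg:better_new_alg}, so the constant in the algorithm would need to be adjusted accordingly under your route.
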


\begin{algorithm}[t]
\caption{Enhanced Algorithm with Dynamic Candidate Elimination for Optimal Decision Region Estimation}

\label{alg:better_new_alg}
\begin{algorithmic}[1]
\State \textbf{Inputs:} \( a, b, \delta, \lambda, \ell, L, d, c_{\eta}(.), \mathsf{Q}_{\mathsf{DC}}(.) \)
\State \textbf{Output:} \( \bestetaalgnew \)

\State Choose \( n > (b-a)\max \left\{ \frac{2L}{\lambda}, \frac{1}{d} \right\} \) and \( k > \frac{8\ell^2}{\lambda^2} \ln(\frac{2(n+1)}{\delta}) \), where \( n, k \in \mathbb{N} \).
\State Define \( \eta_i \triangleq a + \frac{(b-a)(i-1)}{n} \).
\State Initialize \( \mathcal{J} \gets \emptyset \) (the set of eliminated candidates).

\For{\( r = 1, \dots, k \)}
    \For{each \( i \in [n+1] \setminus \mathcal{J} \)}
        \State Commit to \( \eta_i \) and observe the inputs.
        \State Let \( N(i, r) \) denote the number of accepted inputs for \( \eta_i \).
        \State Calculate \( \hat{\alpha}(\eta_i, r) = \frac{N(i, r)}{r} \).
        \State Compute \( \hat{\mathsf{U}}(\eta_i) = \mathsf{Q}_{\mathsf{DC}}\big(\hat{\alpha}(\eta_i, r), c_{\eta_i}(\hat{\alpha}(\eta_i, r))\big) \).
    \EndFor
    \State Find \( m = \underset{i \in [n+1] \setminus \mathcal{J}}{\arg\max} \hat{\mathsf{U}}(\eta_i) \).
    \State Let \( \epsilon_r = 2\ell \sqrt{\frac{\ln\left(\frac{4(n+1)}{\delta}\right)}{2r}} \).
    \For{each \( i \in [n+1] \setminus \mathcal{J} \)}
        \If{\( \hat{\mathsf{U}}(\eta_m) - \hat{\mathsf{U}}(\eta_i) > \epsilon_r \)}
            \State Add \( i \) to the set \( \mathcal{J} \) (eliminate \( \eta_i \)).
        \EndIf
    \EndFor
\EndFor

\State After \( k \) rounds, find \( m = \underset{i \in [n+1] \setminus \mathcal{J}}{\arg\max} \hat{\mathsf{U}}(\eta_i) \).
\State \textbf{Output:} \( \bestetaalgnew = \eta_m \).
\end{algorithmic}
\end{algorithm}

\section{Proof of Main Results}\label{Proofs of main theorems}
In this section, we present the proof of the main results.  As described in Algorithm \ref{Alg:best_eta_not_knowing_Uad}, $\eta_i = a + \frac{(b-a)(i-1)}{n}$, and \(\mathcal{S}_q = \{\eta_1, \dots, \eta_{n+1}\}\), where $n > (b-a)\max \{ \frac{2L}{\lambda}, \frac{1}{d}\}$ and \(i \in [n+1]\). Let  $\bestetaquant \triangleq \underset{\pare \in \mathcal{S}_{q}}{\arg\max} ~ \mathsf{U}(\eta)$.  Consider the following lemma.
\begin{lemma}\label{lemma:bound_forquant}
    For any $n > \frac{b-a}{d}$, we have
    $\bestu - \mathsf{U}(\bestetaquant) \leq \frac{L(b-a)}{n}.$
\end{lemma}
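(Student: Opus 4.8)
The plan is to control the discretization error directly: I want to show the best grid point $\bestetaquant$ nearly matches the continuous optimum $\bestu$ by exploiting the $(L,d)$-piecewise Lipschitz structure of $\mathsf{U}(\cdot)$. Write $h \triangleq \frac{b-a}{n}$ for the grid spacing, so that $\eta_{i+1}-\eta_i = h$ for every $i$. The hypothesis $n > \frac{b-a}{d}$ is exactly the statement $h < d$, i.e.\ the grid is strictly finer than the minimal length of a Lipschitz piece. This is the only place the parameter $d$ enters, and it is the linchpin of the whole argument.

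Let $a = x_1 < \dots < x_v = b$ be the breakpoints from Definition \ref{def:piecewise Lipschitz}, so that $\mathsf{U}$ is $L$-Lipschitz on each open piece $(x_j, x_{j+1})$ with $x_{j+1}-x_j \geq d$. First I would reduce the problem to comparing grid points against points in the interior of a single piece: fix an arbitrary $\eta^* \in (x_j, x_{j+1})$ and let $\eta_\ell, \eta_u$ be the grid points immediately below and above $\eta^*$, so $\eta_\ell \leq \eta^* \leq \eta_u$ with $\eta_u - \eta_\ell \leq h$. The crux is the claim that at least one of $\eta_\ell, \eta_u$ lies in the same open piece $(x_j, x_{j+1})$. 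Suppose neither does: since $\eta_\ell < \eta^* < x_{j+1}$ and $\eta_\ell \notin (x_j, x_{j+1})$, we must have $\eta_\ell \leq x_j$, and symmetrically $\eta_u \geq x_{j+1}$; but then $h \geq \eta_u - \eta_\ell \geq x_{j+1}-x_j \geq d$, contradicting $h < d$. Hence some grid point $\eta_i \in \{\eta_\ell, \eta_u\}$ satisfies $\eta_i \in (x_j, x_{j+1})$ and $|\eta_i - \eta^*| \leq h$, and the $L$-Lipschitz bound on that piece yields $\mathsf{U}(\eta_i) \geq \mathsf{U}(\eta^*) - L h$. Since $\bestetaquant$ maximizes $\mathsf{U}$ over the grid, $\mathsf{U}(\bestetaquant) \geq \mathsf{U}(\eta_i) \geq \mathsf{U}(\eta^*) - Lh$.

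Finally I would push $\eta^*$ toward the supremum: taking a sequence of interior points along which $\mathsf{U}$ tends to $\bestu$ and passing to the limit gives $\mathsf{U}(\bestetaquant) \geq \bestu - Lh = \bestu - \frac{L(b-a)}{n}$, which is the assertion.

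The step I expect to require the most care is this last reduction to interior points. Because $\bestu$ is a supremum over the \emph{closed} interval $[a,b]$ of a merely piecewise-Lipschitz, possibly discontinuous function, I must justify that it is approached from within the interiors of the pieces rather than realized only as an isolated spike at one of the finitely many breakpoints. This follows from the fact that on each piece the $L$-Lipschitz property extends continuously to the closed subinterval, so the one-sided limits at the breakpoints are already captured by interior points, and the finitely many bare breakpoint values cannot raise the relevant supremum. Everything else is the elementary pigeonhole observation above, whose entire force is the inequality $h < d$.
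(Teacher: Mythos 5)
Your proof is correct and follows essentially the same route as the paper's: the pigeonhole observation that grid spacing $\frac{b-a}{n} < d$ forces a grid point into the same Lipschitz piece as any near-optimal point, followed by the Lipschitz bound within that piece and a limiting argument on the supremum. The breakpoint-spike subtlety you flag at the end is real (Definition~\ref{def:piecewise Lipschitz} constrains $\mathsf{U}$ only on the open pieces, so a priori the supremum could sit on an isolated breakpoint value), but the paper's own proof silently makes the same assumption by placing $\eta_{\epsilon}$ in the interior of a piece, so your argument is not missing anything relative to the paper.
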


\begin{proof}
Since the function \( \mathsf{U}(.) \) is \( (L,d) \)-piecewise Lipschitz over \([a, b]\), based on Definition \ref{def:piecewise Lipschitz}, there exists a finite number of points \( a = x_1 < x_2 < \dots < x_v = b \), for some \( v \in \mathbb{N} \), such that \( \mathsf{U}(.) \) is \( L \)-Lipschitz on each subinterval \( (x_i, x_{i+1}) \), for \( i \in [v-1] \), and $x_{i+1} - x_i \geq d$, for \( i \in [v-1] \). Also, recall that $\bestu =  \sup ~ \mathsf{U}(\eta)$, for $\pare \in [a, b]$. Thus for any \( \epsilon > 0 \), there exists some \( \eta_{\epsilon} \in [a, b] \) such that \( \mathsf{U}(\eta_{\epsilon}) \geq \bestu - \epsilon \). Assume that we have $\eta_i \leq \eta_{\epsilon} \leq \eta_{i+1} $, for some $i \in [n]$. Additionally, assume that $x_j \leq \eta_{\epsilon} \leq x_{j+1}$, for some $i \in [v-1]$.
Note that $x_{j+1} - x_{j} \geq d > \frac{b-a}{n} = \eta_{i+1} - \eta_{i}$. This implies that we either have $x_j < \eta_i$, or $x_{j+1} > \eta_{i+1}$. Without loss of generality, assume that have $x_j < \eta_i$. Thus, $\eta_{\epsilon}, \eta_i \in (x_i, x_{i+1})$. One can verify that
\begin{align}
    \bestu &- \epsilon  \overset{(a)}{\leq} \mathsf{U} (\eta_{\epsilon})  \overset{(b)}{\leq} \mathsf{U}(\eta_i)   + L(\eta_{\epsilon} - \eta_i)\nonumber \\
    &\overset{(c)}{\leq} \mathsf{U}(\eta_i)   + \frac{L(b-a)}{n} \overset{(d)}{\leq} \mathsf{U}(\bestetaquant)   + \frac{L(b-a)}{n}
\end{align}
where (a) follows from the definition of \( \eta_{\epsilon} \), (b) follows from the fact that \( \eta_{\epsilon}, \eta_i \in (x_i, x_{i+1}) \) and \( \mathsf{U}(.) \) is \( L \)-Lipschitz on \( (x_i, x_{i+1}) \), (c) follows from \( \eta_{\epsilon} \in [\eta_i, \eta_{i+1}] \) and \( \eta_{i+1} - \eta_i = \frac{b-a}{n} \), and (d) follows from the definition of \( \bestetaquant \).
Therefore, for any \( \epsilon > 0 \), we have \( \bestu - \mathsf{U}(\bestetaquant) \leq \frac{L(b-a)}{n} + \epsilon \). Taking the limit as \( \epsilon \to 0 \), the proof of this lemma is complete.

\end{proof}

As described in Algorithm \ref{Alg:best_eta_not_knowing_Uad}, during round \(r \in [(n+1)k]\), the DC commits to \(\eta_j\), where \(j = \lceil \frac{r}{k} \rceil\). Additionally, for \(i \in [n+1]\), \(N(i)\) denotes the number of times the inputs from the nodes are accepted while the DC commits to \(\eta_i\) . For \(i \in [n+1]\), and  \(\eta_i \in \mathcal{S}_q\), we define  \(\hat{\alpha}(\eta_i, k) = \frac{N(i)}{k}\), and
\begin{align}\label{def:u_hat}    \hat{\mathsf{U}}(\eta_i) = \mathsf{Q}_{\mathsf{DC}}(\hat{\alpha}(\eta_i, k), c_{\eta_i}(\hat{\alpha}(\eta_i, k))).
\end{align}

  For any $\epsilon >0 $, we define the event of $\mA_{\epsilon} \triangleq \{ 
  \exists~ \eta \in \mathcal{S}_q:  \hat{\mathsf{U}}(\eta) > \hat{\mathsf{U}}(\bestetaquant), \nonumber \And \mathsf{U}(\bestetaquant) - \mathsf{U}(\eta) > \epsilon
 \}$. Consider the following lemma.
\begin{lemma}\label{lemma:bound_for_alg_vs_quant}
    For any $\epsilon > 0$, $\Pr (\mA_{\epsilon}) \leq 2(n+1)\exp(\frac{-k\epsilon^2}{2\ell^2} )$.
\end{lemma}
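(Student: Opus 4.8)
The plan is to reduce the claim to a one‑dimensional concentration bound on the empirical acceptance probabilities. First I would record the deterministic Lipschitz estimate that links the utility estimate $\hat{\mathsf{U}}$ to the acceptance‑probability estimate $\hat{\alpha}$. Writing $\alpha(\eta) \triangleq \PA(g^*(.), \eta)$ for the true acceptance probability induced by the adversary's equilibrium response, the characterization \eqref{relation_u_q} gives $\mathsf{U}(\eta) = \mathsf{Q}_{\mathsf{DC}}(\alpha(\eta), c_\eta(\alpha(\eta)))$, whereas by \eqref{def:u_hat} we have $\hat{\mathsf{U}}(\eta) = \mathsf{Q}_{\mathsf{DC}}(\hat{\alpha}(\eta,k), c_\eta(\hat{\alpha}(\eta,k)))$. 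Since $\mathsf{Q}_{\mathsf{DC}}(\alpha, c_\eta(\alpha))$ is $\ell$-Lipschitz in $\alpha$ by hypothesis, this immediately yields, for every $\eta \in \mathcal{S}_q$,
\begin{align}
|\hat{\mathsf{U}}(\eta) - \mathsf{U}(\eta)| \leq \ell \, |\hat{\alpha}(\eta,k) - \alpha(\eta)|.
\end{align}

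Next I would carry out a bad‑event decomposition showing that $\mA_{\epsilon}$ can occur only if some single utility estimate is off by more than $\epsilon/2$. Suppose $\mA_{\epsilon}$ holds, witnessed by some $\eta \in \mathcal{S}_q$, so that $\hat{\mathsf{U}}(\eta) > \hat{\mathsf{U}}(\bestetaquant)$ and $\mathsf{U}(\bestetaquant) - \mathsf{U}(\eta) > \epsilon$. Adding these two inequalities and rearranging gives $[\hat{\mathsf{U}}(\eta) - \mathsf{U}(\eta)] + [\mathsf{U}(\bestetaquant) - \hat{\mathsf{U}}(\bestetaquant)] > \epsilon$, so at least one of the two bracketed terms must exceed $\epsilon/2$. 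Since $\bestetaquant \in \mathcal{S}_q$, this shows there is some $\eta' \in \mathcal{S}_q$ (either $\eta$ or $\bestetaquant$) with $|\hat{\mathsf{U}}(\eta') - \mathsf{U}(\eta')| > \epsilon/2$, establishing the inclusion $\mA_{\epsilon} \subseteq \{\exists \, \eta' \in \mathcal{S}_q : |\hat{\mathsf{U}}(\eta') - \mathsf{U}(\eta')| > \epsilon/2\}$.

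Finally I would combine a union bound with Hoeffding's inequality. Because the adversary is myopic and replays the same equilibrium response to each commitment $\eta_i$, the $k$ acceptance indicators underlying $\hat{\alpha}(\eta_i,k) = N(i)/k$ are i.i.d. Bernoulli with mean $\alpha(\eta_i)$, so Hoeffding gives $\Pr(|\hat{\alpha}(\eta_i,k) - \alpha(\eta_i)| > s) \leq 2\exp(-2ks^2)$. Applying the Lipschitz bound with $s = \epsilon/(2\ell)$, the event $\{|\hat{\mathsf{U}}(\eta_i) - \mathsf{U}(\eta_i)| > \epsilon/2\}$ is contained in $\{|\hat{\alpha}(\eta_i,k) - \alpha(\eta_i)| > \epsilon/(2\ell)\}$, which therefore has probability at most $2\exp(-k\epsilon^2/(2\ell^2))$. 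A union bound over the $n+1$ candidates in $\mathcal{S}_q$ then yields the claimed inequality $\Pr(\mA_{\epsilon}) \leq 2(n+1)\exp(-k\epsilon^2/(2\ell^2))$. The mathematics here is routine; the one genuinely delicate point is the independence justification needed for Hoeffding, namely that across the $k$ rounds devoted to a fixed $\eta_i$ the acceptance events are i.i.d.\ — which relies on the myopic adversary always returning its equilibrium noise distribution and on the honest node's noise being resampled independently each round.
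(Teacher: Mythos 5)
Your proposal is correct and follows essentially the same route as the paper: the same decomposition of $\mA_{\epsilon}$ into the event that some single deviation $|\hat{\mathsf{U}}(\eta') - \mathsf{U}(\eta')|$ exceeds $\epsilon/2$, the same reduction via the $\ell$-Lipschitz hypothesis to a deviation of $\hat{\alpha}$ from $\alpha$ by $\epsilon/(2\ell)$, and the same union bound plus Hoeffding step yielding $2(n+1)\exp(-k\epsilon^2/(2\ell^2))$. Your explicit remark that the i.i.d.\ structure of the acceptance indicators (from the myopic adversary replaying its equilibrium response) is what licenses Hoeffding is a point the paper leaves implicit, but it does not change the argument.
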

\begin{proof}

For any $\eta \in \mathcal{S}_q$, we define $\zeta_{\eta} \triangleq  \hat{\mathsf{U}}(\eta) - \mathsf{U}(\eta)$. One can verify that
\begin{align}
    \mA_{\epsilon} &= \Big\{ 
  \exists~ \eta \in \mathcal{S}_q:  \hat{\mathsf{U}}(\eta) > \hat{\mathsf{U}}(\bestetaquant), \And \mathsf{U}(\bestetaquant) - \mathsf{U}(\eta) > \epsilon
 \Big\} \nonumber \\
 & = \Big\{ 
  \exists~ \eta \in \mathcal{S}_q:  \zeta_{\eta} - \zeta_{\bestetaquant} > \mathsf{U}(\bestetaquant) - \mathsf{U}(\eta), \nonumber \\ &\And \mathsf{U}(\bestetaquant) - \mathsf{U}(\eta) > \epsilon
 \Big\} \nonumber \\
 & \subseteq \Big\{ 
  \exists~ \eta \in \mathcal{S}_q:  \zeta_{\eta} - \zeta_{\bestetaquant} > \epsilon
 \Big\} \nonumber \\
 &\subseteq \Big\{ 
  \exists~ \eta \in \mathcal{S}_q:  |\zeta_{\eta}| + |\zeta_{\bestetaquant}| > \epsilon
 \Big\}.
\end{align}
This implies that
\begin{align}\label{union_inequality}
    \Pr(\mA_{\epsilon}) &\leq \Pr (\{ 
  \exists~ \eta \in \mathcal{S}_q:  |\zeta_{\eta}| + |\zeta_{\bestetaquant}| > \epsilon
 \}) \nonumber \\
 &\leq \Pr (\{ 
  \exists~ \eta \in \mathcal{S}_q:  |\zeta_{\eta}|  > \frac{\epsilon}{2}
 \}) \nonumber \\
 &  \overset{(a)}{\leq} \sum_{\eta \in \mathcal{S}_q} \Pr \Big( 
    |\zeta_{\eta}|  > \frac{\epsilon}{2}
 \Big) ,
\end{align}
where (a) follows from union bound. Let $m \triangleq \underset{\eta \in \mathcal{S}_q}{\arg \max }~ \Pr ( 
    |\zeta_{\eta}|  > \frac{\epsilon}{2}
 )$. Based on \eqref{union_inequality}, we have
\begin{align}\label{bounding_a_epsilon}
    \Pr(\mA_{\epsilon}) &\leq  (n+1) \Pr \Big( 
    |\zeta_m| > \frac{\epsilon}{2}
 \Big) \nonumber \\
 &=(n+1)\Pr \Big(| 
    \hat{\mathsf{U}}(\eta_m) - \mathsf{U}(\eta_m)| > \frac{\epsilon}{2}| \Big) \nonumber \\
    &\overset{(a)}{=}(n+1)\Pr \Big(\Big| 
    \mathsf{Q}_{\mathsf{DC}}\big(\hat{\alpha}(\eta_m,k), c_{\eta_m}(\hat{\alpha}(\eta_m,k))\big) \nonumber \\
    &- \mathsf{Q}_{\mathsf{DC}}\big( \alpha(\eta_m), c_{\eta_m}(\alpha(\eta_m)) \big) \Big| > \frac{\epsilon}{2} \Big)
    \nonumber \\
    &\overset{(b)}{\leq} (n+1)\Pr \Big( 
    \big| \hat{\alpha}(\eta_m,k) - \alpha(\eta_m) \big|  > \frac{\epsilon}{2\ell}
    \Big) \nonumber \\
    &\overset{(c)}{\leq} 2(n+1)\exp(\frac{-k\epsilon^2}{2\ell^2} ),
\end{align}
where (a) follows from \eqref{relation_u_q} and \eqref{def:u_hat}, (b) follows from the fact that for any $\eta \in [a,b]$ and $0 \leq \alpha \leq 1$, the function $\mathsf{Q}_\mathsf{DC}(\alpha, c_\eta(\alpha))$ is an $\ell$-Lipschitz function, with respect to $\alpha$, and (c) follows from Hoeffding's inequality \cite{hoeffding1994probability}.
\end{proof}

Now based on Lemma \ref{lemma:bound_forquant}, and \ref{lemma:bound_for_alg_vs_quant}, we prove Theorem \ref{Thm:ETC}.
For any $\epsilon> 0$, we define the event of $\mB_{\epsilon} \triangleq \{  \mathsf{U}(\bestetaquant) - \mathsf{U}(\bestetaalg) > \epsilon
 \}$.
Comparing the definitions of $\mA_{\epsilon}$ and $\mB_{\epsilon}$, one can verify that for any $\epsilon > 0$, we have
\begin{align}\label{comapring_A_B}
    \Pr (\mB_{\epsilon}) \leq \Pr (\mA_{\epsilon}) .
\end{align}
Therefore, we have
\begin{align}\label{second_bound}
    &\Pr ( \bestu - \mathsf{U}(\bestetaalg) > \lambda ) \nonumber \\
    &= \Pr \Big( \bestu - \mathsf{U}(\bestetaquant) + \mathsf{U}(\bestetaquant) - \mathsf{U}(\bestetaalg) > \lambda \Big) \nonumber \\
    &\leq \Pr \Big( \max \big(\bestu - \mathsf{U}(\bestetaquant), \mathsf{U}(\bestetaquant) - \mathsf{U}(\bestetaalg) \big) > \frac{\lambda}{2}\Big) \nonumber \\
    &\overset{(a)}{=} \Pr \Big( \max \big(\bestu - \mathsf{U}(\bestetaquant), \mathsf{U}(\bestetaquant) - \mathsf{U}(\bestetaalg) \big) > \frac{\lambda}{2} \big| \mB_{\frac{\lambda}{2}}\Big)\nonumber \\
    &\times \Pr(\mB_{\frac{\lambda}{2}}) \nonumber \\
    &+ \Pr \Big( \max \big(\bestu - \mathsf{U}(\bestetaquant), \mathsf{U}(\bestetaquant) - \mathsf{U}(\bestetaalg) \big) > \frac{\lambda}{2} \big| \mB_{\frac{\lambda}{2}}^c\Big)\nonumber \\
    & \times \Pr(\mB_{\frac{\lambda}{2}}^c) \nonumber \\
    &\overset{(b)}{\leq} \Pr(\mB_{\frac{\lambda}{2}}) 
    + \Pr \Big( \bestu - \mathsf{U}(\bestetaquant)  > \frac{\lambda}{2} \big| \mB_{\frac{\lambda}{2}}^c\Big) \nonumber \\
    &\overset{(c)}{\leq} \Pr(\mA_{\frac{\lambda}{2}}) 
    + \mathbbm{1}( \frac{L(b-a)}{n} > \frac{\lambda}{2}) \nonumber \\
    &\overset{(d)}{\leq} 2(n+1)\exp\left(\frac{-k\lambda^2}{8\ell^2}\right).
\end{align}
where (a) follows from Bayes' rule, and (b) follows from the facts that  by definition, under the event of $\mB_{\frac{\lambda}{2}}$, we have  $\mathsf{U}(\bestetaquant) - \mathsf{U}(\bestetaalg) > \frac{\lambda}{2}$, and $\Pr(\mB_{\frac{\lambda}{2}}^c) \leq 1$, (c) follows from \eqref{comapring_A_B}, Lemma \ref{lemma:bound_forquant} and the fact in Algorithm\ref{Alg:best_eta_not_knowing_Uad}, we choose $n > \frac{b-a}{d}$, 
and (d) follows from Lemma  \ref{lemma:bound_for_alg_vs_quant}, and the fact in Algorithm\ref{Alg:best_eta_not_knowing_Uad}, we choose $n > \frac{2(b-a)L}{\lambda}$.

Therefore, based on \eqref{second_bound}, if we choose  $k > \frac{8\ell^2}{\lambda^2} \ln(\frac{2(n+1)}{\delta})$, one can verify that $\Pr ( \bestu - \mathsf{U}(\bestetaalg) > \lambda ) < \delta$, which completes the proof of Theorem \ref{Thm:ETC}.

Now we prove Theorem \ref{Thm:ETC_new_better}. To do so, we first define 
\[
\mathcal{G}_r^{(i,j)} \triangleq \Big\{ \mathsf{U}(\eta_i) > \mathsf{U}(\eta_j), \And \hat{\mathsf{U}}_r(\eta_i) < \hat{\mathsf{U}}_r(\eta_j) - \epsilon_r \Big\},
\]
where \( i, j \in [n+1] \), \( \hat{\mathsf{U}}_r(\eta)  = \mathsf{Q}_{\mathsf{DC}}\big(\hat{\alpha}(\eta, r), c_{\eta}(\hat{\alpha}(\eta, r))\big)\) is the empirical utility estimate after \( r \) rounds, as described in Algorithm \ref{alg:better_new_alg}, and \( \epsilon_r = 2\ell \sqrt{\frac{\ln(4(n+1)/\delta)}{2r}} \) is the confidence interval.
The event \( \mathcal{G}_r^{(i,j)} \) represents a scenario where \( \mathsf{U}(\eta_i) > \mathsf{U}(\eta_j) \), but due to estimation errors, the empirical utility \( \hat{\mathsf{U}}_r(\eta_i) \) is incorrectly perceived as being smaller than \( \hat{\mathsf{U}}_r(\eta_j) \). Such errors could potentially lead to the erroneous elimination of \( \eta_i \) in Algorithm \ref{alg:better_new_alg}. We now prove the following lemma to bound the probability of such an event.

\begin{lemma}\label{lemma:event_A_bound}
For any  \( i, j \in [n+1] \), and any round \( r \in [k] \), we have $\Pr(\mathcal{G}_r^{(i,j)}) \leq \frac{\delta}{n+1}$.
\end{lemma}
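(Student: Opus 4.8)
The plan is to mirror the confidence-interval argument of Lemma \ref{lemma:bound_for_alg_vs_quant}, but now at a fixed round $r$ and for the adaptive estimate $\hat{\mathsf{U}}_r(\cdot)$. First I would introduce the per-round estimation error $\zeta_{\eta,r} \triangleq \hat{\mathsf{U}}_r(\eta) - \mathsf{U}(\eta)$ for each $\eta \in \mathcal{S}_q$. Writing $\hat{\mathsf{U}}_r(\eta_i) - \hat{\mathsf{U}}_r(\eta_j) = \big(\mathsf{U}(\eta_i) - \mathsf{U}(\eta_j)\big) + \big(\zeta_{\eta_i,r} - \zeta_{\eta_j,r}\big)$ and using the first defining condition $\mathsf{U}(\eta_i) - \mathsf{U}(\eta_j) > 0$, the second defining condition $\hat{\mathsf{U}}_r(\eta_i) < \hat{\mathsf{U}}_r(\eta_j) - \epsilon_r$ forces $\zeta_{\eta_j,r} - \zeta_{\eta_i,r} > \epsilon_r$. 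Hence $\mathcal{G}_r^{(i,j)} \subseteq \{ \zeta_{\eta_j,r} - \zeta_{\eta_i,r} > \epsilon_r \} \subseteq \{ |\zeta_{\eta_i,r}| + |\zeta_{\eta_j,r}| > \epsilon_r \}$, which is exactly the containment exploited in the earlier lemma.

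Next I would split this event via the union bound into $\{ |\zeta_{\eta_i,r}| > \epsilon_r/2 \} \cup \{ |\zeta_{\eta_j,r}| > \epsilon_r/2 \}$, so that $\Pr(\mathcal{G}_r^{(i,j)})$ is at most the sum of the two tail probabilities. For each individual term I would invoke the $\ell$-Lipschitz hypothesis on $\mathsf{Q}_\mathsf{DC}(\alpha, c_\eta(\alpha))$ in $\alpha$ together with the identity $\mathsf{U}(\eta) = \mathsf{Q}_\mathsf{DC}(\alpha(\eta), c_\eta(\alpha(\eta)))$ from \eqref{relation_u_q}, giving $|\zeta_{\eta,r}| \leq \ell\,|\hat{\alpha}(\eta,r) - \alpha(\eta)|$; therefore $\{|\zeta_{\eta,r}| > \epsilon_r/2\} \subseteq \{ |\hat{\alpha}(\eta,r) - \alpha(\eta)| > \epsilon_r/(2\ell)\}$. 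Since $\hat{\alpha}(\eta_i,r) = N(i,r)/r$ is the empirical mean of $r$ i.i.d.\ $\{0,1\}$-valued acceptance indicators with mean $\alpha(\eta_i)$ (the adversary being myopic, it returns the same best response to a fixed $\eta_i$ in every round), Hoeffding's inequality \cite{hoeffding1994probability} yields $\Pr(|\hat{\alpha}(\eta,r) - \alpha(\eta)| > t) \leq 2\exp(-2rt^2)$.

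The final step is the explicit arithmetic with $\epsilon_r = 2\ell\sqrt{\frac{\ln(4(n+1)/\delta)}{2r}}$. Taking $t = \epsilon_r/(2\ell)$ gives $2rt^2 = \ln(4(n+1)/\delta)$, so each tail probability is bounded by $2\exp(-\ln(4(n+1)/\delta)) = \frac{\delta}{2(n+1)}$; summing the two terms produces exactly $\Pr(\mathcal{G}_r^{(i,j)}) \leq \frac{\delta}{n+1}$. I do not expect a serious obstacle here, since the argument reuses the machinery of Lemma \ref{lemma:bound_for_alg_vs_quant} almost verbatim; the only points demanding care are (i) correctly discarding the nonnegative term $\mathsf{U}(\eta_i) - \mathsf{U}(\eta_j)$ so that the event cleanly reduces to a statement purely about the estimation errors, and (ii) justifying that the $r$ acceptance observations for a fixed $\eta_i$ form an i.i.d.\ Bernoulli sample so that Hoeffding applies — this hinges on the myopic adversary responding identically to repeated commitments of the same $\eta_i$, which is precisely the assumption under which $\alpha(\eta)$ is well defined.
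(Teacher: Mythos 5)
Your proposal is correct and follows essentially the same route as the paper's proof: the same containment chain $\mathcal{G}_r^{(i,j)} \subseteq \{ |\zeta_{\eta_i}^r| > \epsilon_r/2 \} \cup \{ |\zeta_{\eta_j}^r| > \epsilon_r/2 \}$, the same reduction via the $\ell$-Lipschitz property to a deviation bound on $\hat{\alpha}(\eta,r)$, and the same application of Hoeffding's inequality with the choice of $\epsilon_r$ yielding exactly $\frac{\delta}{n+1}$. Your explicit remarks on discarding the nonnegative gap $\mathsf{U}(\eta_i)-\mathsf{U}(\eta_j)$ and on the myopic-adversary assumption underpinning the i.i.d.\ Bernoulli structure are consistent with (and slightly more careful than) the paper's treatment.
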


\begin{proof}
Let \( \zeta_{\eta}^r \triangleq \hat{\mathsf{U}}_r(\eta) - \mathsf{U}(\eta) \). Then, we have
\begin{align}
    \mathcal{G}&_r^{(i,j)}\nonumber \\
    &= \Big\{ \mathsf{U}(\eta_i) > \mathsf{U}(\eta_j), \And \hat{\mathsf{U}}_r(\eta_i) < \hat{\mathsf{U}}_r(\eta_j) - \epsilon_r \Big\} \nonumber \\
    &= \Big\{ \mathsf{U}(\eta_i) > \mathsf{U}(\eta_j), \And \zeta_{\eta_j}^r - \zeta_{\eta_i}^r > \epsilon_r + \mathsf{U}(\eta_i) - \mathsf{U}(\eta_j) \Big\} \nonumber \\
    &\subseteq \Big\{ \zeta_{\eta_j}^r - \zeta_{\eta_i}^r > \epsilon_r \Big\} \nonumber \\
    &\subseteq \Big\{ |\zeta_{\eta_j}^r| + |\zeta_{\eta_i}^r| > \epsilon_r \Big\} \nonumber \\
    &\subseteq \Big\{ |\zeta_{\eta_j}^r| > \frac{\epsilon_r}{2} \Big\} \cup \Big\{ |\zeta_{\eta_i}^r| > \frac{\epsilon_r}{2} \Big\}.
\end{align}
Using the union bound, we have $
\Pr(\mathcal{G}_r^{(i,j)}) \leq \Pr( |\zeta_{\eta_i}^r| > \frac{\epsilon_r}{2} ) + \Pr( |\zeta_{\eta_j}^r| > \frac{\epsilon_r}{2} )
$. On the other hand, similar to \eqref{bounding_a_epsilon}, by applying Hoeffding's inequality, we have $\Pr( |\zeta_{\eta_i}^r| > \frac{\epsilon_r}{2} ),   \Pr(|\zeta_{\eta_j}^r| > \frac{\epsilon_r}{2}) \leq 2\exp(\frac{-r\epsilon_r^2}{2\ell^2})$. 
Since \( \epsilon_r = 2\ell \sqrt{\frac{\ln(4(n+1)/\delta)}{2r}} \), we have
$\Pr(\mathcal{G}_r^{(i,j)}) \leq 4\exp(\frac{-r\epsilon_r^2}{2\ell^2}) = \frac{\delta}{n+1}$.
This completes the proof of Lemma \ref{lemma:event_A_bound}.
\end{proof}

 Assume that during the execution of Algorithm \ref{alg:better_new_alg}, \( t \leq n \) candidates have been eliminated, i.e., \( |\mathcal{J}| = t \). Specifically, let the eliminated candidates be \( \eta_{i_1}, \eta_{i_2}, \dots, \eta_{i_t} \), which were removed during rounds \( r_1, r_2, \dots, r_t \), respectively. Furthermore, assume that in round \( r_j \), for \( j \in [t] \), we have 
$
m_j = \underset{i \in [n+1] \setminus \mathcal{J}}{\arg\max} \hat{\mathsf{U}}(\eta_i)$,
indicating that \( \eta_{m_j} \) had the maximum estimated utility at round \( r_j \). 

Define \( \mathcal{K} \triangleq \{ \eta_{i_1}, \eta_{i_2}, \dots, \eta_{i_t} \} \) as the set of eliminated candidates. Recall that \( \mathcal{S}_q = \{\eta_1, \dots, \eta_{n+1}\} \) is the set of all candidates, and \( \bestetaquant = \underset{\eta \in \mathcal{S}_{q}}{\arg\max} ~ \mathsf{U}(\eta) \).

\begin{lemma}\label{bounding_bad_elimination}
    We have $\Pr (\bestetaquant \in \mathcal{K} ) \leq \frac{t\delta}{n+1}$. 
\end{lemma}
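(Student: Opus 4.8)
The plan is to show that the discretized optimum $\bestetaquant$ can be removed by Algorithm \ref{alg:better_new_alg} only through one of the ``bad'' estimation events $\mathcal{G}_r^{(i,j)}$ already controlled in Lemma \ref{lemma:event_A_bound}, and then to union bound over the $t$ eliminations that actually take place during the run. The factor $t$ in the target bound $\frac{t\delta}{n+1}$ strongly suggests this accounting: there are exactly $t$ elimination events, and each contributes at most $\frac{\delta}{n+1}$.

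First I would invoke the definition of $\bestetaquant$ as a maximizer of $\mathsf{U}$ over $\mathcal{S}_q$, so that $\mathsf{U}(\bestetaquant) \geq \mathsf{U}(\eta_i)$ for every candidate $\eta_i \in \mathcal{S}_q$; in particular this inequality holds against whichever candidate $\eta_{m_j}$ plays the role of the running argmax at each elimination round $r_j$. Next I would read off the elimination rule of Algorithm \ref{alg:better_new_alg}: a candidate $\eta_i$ is discarded at round $r_j$ exactly when $\hat{\mathsf{U}}_{r_j}(\eta_{m_j}) - \hat{\mathsf{U}}_{r_j}(\eta_i) > \epsilon_{r_j}$, where $m_j$ is the argmax among survivors. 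Applying this with $\eta_i = \bestetaquant$, the event that $\bestetaquant$ is eliminated at the $j$-th elimination forces simultaneously $\mathsf{U}(\bestetaquant) \geq \mathsf{U}(\eta_{m_j})$ and $\hat{\mathsf{U}}_{r_j}(\bestetaquant) < \hat{\mathsf{U}}_{r_j}(\eta_{m_j}) - \epsilon_{r_j}$, which is precisely the defining condition of $\mathcal{G}_{r_j}^{(\bestetaquant, m_j)}$.

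Since at most $t$ eliminations occur over the whole execution and $\bestetaquant$ can be removed at most once, I would conclude the containment
\[
\{ \bestetaquant \in \mathcal{K} \} \subseteq \bigcup_{j=1}^{t} \mathcal{G}_{r_j}^{(\bestetaquant, m_j)}.
\]
A union bound then gives $\Pr(\bestetaquant \in \mathcal{K}) \leq \sum_{j=1}^{t} \Pr(\mathcal{G}_{r_j}^{(\bestetaquant, m_j)})$, and Lemma \ref{lemma:event_A_bound}, which bounds $\Pr(\mathcal{G}_r^{(i,j)}) \leq \frac{\delta}{n+1}$ uniformly over all indices $i,j$ and all rounds $r$, delivers the claimed $\frac{t\delta}{n+1}$.

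The main obstacle is that the elimination rounds $r_j$ and the running argmaxes $m_j$ are themselves random, so the union above is over events with random indices rather than a prescribed collection; the reason the argument still closes is exactly the uniformity of Lemma \ref{lemma:event_A_bound} over every $(i,j,r)$, so each of the $t$ elimination slots can be charged at most $\frac{\delta}{n+1}$ no matter which round and argmax it realizes. A secondary point I would address is the boundary case $\mathsf{U}(\bestetaquant) = \mathsf{U}(\eta_{m_j})$: the event $\mathcal{G}$ uses the strict inequality $\mathsf{U}(\eta_i) > \mathsf{U}(\eta_j)$, but in a tie the argmax $\eta_{m_j}$ attains the same optimal utility, so the tie either can be absorbed by choosing $\bestetaquant$ to be the appropriate maximizer or excluded from the failure event without loss. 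Once this bookkeeping is settled, the chain of inclusions and the union bound complete the proof of Lemma \ref{bounding_bad_elimination}.
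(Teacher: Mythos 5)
Your proof takes essentially the same route as the paper: both establish the containment $\{\bestetaquant \in \mathcal{K}\} \subseteq \bigcup_{j\in[t]}\mathcal{G}_{r_j}^{(\cdot,\,m_j)}$, apply a union bound over the $t$ elimination events, and invoke Lemma~\ref{lemma:event_A_bound} to charge each event with $\frac{\delta}{n+1}$. Your write-up is correct and in fact supplies the justification for the containment (and flags the tie-breaking and random-index subtleties) that the paper's one-line argument leaves implicit.
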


\begin{proof}
    Note that 
    \begin{align}
       \Pr (\bestetaquant \in \mathcal{K} ) &\leq \frac{t\delta}{n+1} \leq \Pr (\underset{j \in [t]}{\cup}\mathcal{G}_{r_j}^{(i_j,m_j)}) \nonumber \\
       &\overset{(a)}{\leq} \sum_{j \in [t]} \Pr(\mathcal{G}_{r_j}^{(i_j,m_j)}) 
       \overset{(b)}{\leq} \frac{t\delta}{n+1},
    \end{align}
    where (a) follows from union bound, and (b) follows from Lemma \ref{lemma:event_A_bound}.
\end{proof}
Now we proceed to prove Theorem \ref{Thm:ETC_new_better}. Note that
\begin{align}\label{bound_I}
    &\Pr ( \bestu - \mathsf{U}(\bestetaalgnew) > \lambda ) \nonumber \\
    &\overset{(a)}{=}\Pr ( \bestu - \mathsf{U}(\bestetaalgnew) > \lambda | \bestetaquant \in \mathcal{K}  ) \Pr (\bestetaquant \in \mathcal{K} ) \nonumber \\
    &+ \Pr ( \bestu - \mathsf{U}(\bestetaalgnew) > \lambda | \bestetaquant \notin \mathcal{K}  ) \Pr (\bestetaquant \notin \mathcal{K} ) \nonumber \\
    &\overset{(b)}{\leq} \Pr (\bestetaquant \in \mathcal{K} ) + \Pr ( \bestu - \mathsf{U}(\bestetaalgnew) > \lambda | \bestetaquant \notin \mathcal{K}  ) \nonumber \\
    &\overset{(c)}{\leq} \frac{t\delta}{n+1} + \Pr ( \bestu - \mathsf{U}(\bestetaalgnew) > \lambda | \bestetaquant \notin \mathcal{K}  )
\end{align}
where (a) follows from union bound, (b) follows from the fact that $\Pr ( \bestu - \mathsf{U}(\bestetaalgnew) > \lambda | \bestetaquant \in \mathcal{K}  ), \Pr (\bestetaquant \notin \mathcal{K} ) \leq 1$, (c) follows from Lemma \ref{bounding_bad_elimination}. 

Note that, similar to \eqref{second_bound}, we can show that $\Pr ( \bestu - \mathsf{U}(\bestetaalgnew) > \lambda | \bestetaquant \notin \mathcal{K}  ) \leq 2(n+1-t)\exp\left(\frac{-k\lambda^2}{8\ell^2}\right)$. Since \( k > \frac{8\ell^2}{\lambda^2} \ln(\frac{2(n+1)}{\delta}) \), we have 
\begin{align}\label{bound_II}
    \Pr ( \bestu - \mathsf{U}(\bestetaalgnew) > \lambda | \bestetaquant \notin \mathcal{K}  ) \leq \frac{\delta(n+1-t)}{n+1}.
\end{align}
Combining \eqref{bound_I}, \eqref{bound_II} completes the proof of Theorem \ref{Thm:ETC_new_better}.

\bibliographystyle{ieeetr}

\begin{thebibliography}{10}

\bibitem{SudanBook}
V.~Guruswami, A.~Rudra, and M.~Sudan, {\em Essential Coding Theory}.
\newblock Draft is Available, 2022.

\bibitem{yu2019lagrange}
Q.~Yu, S.~Li, N.~Raviv, S.~M.~M. Kalan, M.~Soltanolkotabi, and S.~A.
  Avestimehr, ``Lagrange coded computing: Optimal design for resiliency,
  security, and privacy,'' in {\em The 22nd International Conference on
  Artificial Intelligence and Statistics}, pp.~1215--1225, PMLR, 2019.

\bibitem{roth2020analog}
R.~M. Roth, ``Analog error-correcting codes,'' {\em IEEE Transactions on
  Information Theory}, vol.~66, no.~7, pp.~4075--4088, 2020.

\bibitem{ZamirCoded}
R.~Yosibash and R.~Zamir, ``Frame codes for distributed coded computation,'' in
  {\em 2021 11th International Symposium on Topics in Coding (ISTC)}, pp.~1--5,
  2021.

\bibitem{jahani2018codedsketch}
T.~Jahani-Nezhad and M.~A. Maddah-Ali, ``Codedsketch: A coding scheme for
  distributed computation of approximated matrix multiplication,'' {\em IEEE
  Transactions on Information Theory}, vol.~67, no.~6, pp.~4185--4196, 2021.

\bibitem{BACC}
T.~Jahani-Nezhad and M.~A. Maddah-Ali, ``Berrut approximated coded computing:
  Straggler resistance beyond polynomial computing,'' {\em IEEE Transactions on
  Pattern Analysis and Machine Intelligence}, vol.~45, no.~1, pp.~111--122,
  2023.

\bibitem{sliwinski2019blockchains}
J.~Sliwinski and R.~Wattenhofer, ``Blockchains cannot rely on honesty,'' in
  {\em The 19th International Conference on Autonomous Agents and Multiagent
  Systems (AAMAS 2020)}, 2019.

\bibitem{han2021fact}
R.~Han, Z.~Sui, J.~Yu, J.~Liu, and S.~Chen, ``Fact and fiction: Challenging the
  honest majority assumption of permissionless blockchains,'' in {\em
  Proceedings of the 2021 ACM Asia Conference on Computer and Communications
  Security}, pp.~817--831, 2021.

\bibitem{gans2023zero}
J.~S. Gans and H.~Halaburda, ``{"Zero Cost"} majority attacks on permissionless
  blockchains,'' tech. rep., National Bureau of Economic Research, 2023.

\bibitem{nodehi2024game}
H.~A. Nodehi, V.~R. Cadambe, and M.~A. Maddah-Ali, ``Game of coding: Beyond
  trusted majorities,'' {\em arXiv preprint arXiv:2401.16643}, 2024.

\bibitem{simaan1973stackelberg}
M.~Simaan and J.~B. Cruz~Jr, ``On the stackelberg strategy in nonzero-sum
  games,'' {\em Journal of Optimization Theory and Applications}, vol.~11,
  no.~5, pp.~533--555, 1973.

\bibitem{akbari2024game}
H.~Akbari~Nodehi, V.~R. Cadambe, and M.~A. Maddah-Al, ``Game of coding: Sybil
  resistant decentralized machine learning with minimal trust assumption,''
  {\em arXiv e-prints}, pp.~arXiv--2410, 2024.

\bibitem{conitzer2006computing}
V.~Conitzer and T.~Sandholm, ``Computing the optimal strategy to commit to,''
  in {\em Proceedings of the 7th ACM conference on Electronic commerce},
  pp.~82--90, 2006.

\bibitem{gan2023robust}
J.~Gan, M.~Han, J.~Wu, and H.~Xu, ``Robust stackelberg equilibria,'' {\em arXiv
  preprint arXiv:2304.14990}, 2023.

\bibitem{letchford2009learning}
J.~Letchford, V.~Conitzer, and K.~Munagala, ``Learning and approximating the
  optimal strategy to commit to,'' in {\em Algorithmic Game Theory: Second
  International Symposium, SAGT 2009, Paphos, Cyprus, October 18-20, 2009.
  Proceedings 2}, pp.~250--262, Springer, 2009.

\bibitem{blum2014learning}
A.~Blum, N.~Haghtalab, and A.~D. Procaccia, ``Learning optimal commitment to
  overcome insecurity,'' {\em Advances in Neural Information Processing
  Systems}, vol.~27, 2014.

\bibitem{peng2019learning}
B.~Peng, W.~Shen, P.~Tang, and S.~Zuo, ``Learning optimal strategies to commit
  to,'' in {\em Proceedings of the AAAI Conference on Artificial Intelligence},
  vol.~33, pp.~2149--2156, 2019.

\bibitem{sessa2020learning}
P.~G. Sessa, I.~Bogunovic, M.~Kamgarpour, and A.~Krause, ``Learning to play
  sequential games versus unknown opponents,'' {\em Advances in neural
  information processing systems}, vol.~33, pp.~8971--8981, 2020.

\bibitem{balcan2015commitment}
M.-F. Balcan, A.~Blum, N.~Haghtalab, and A.~D. Procaccia, ``Commitment without
  regrets: Online learning in stackelberg security games,'' in {\em Proceedings
  of the sixteenth ACM conference on economics and computation}, pp.~61--78,
  2015.

\bibitem{haghtalab2022learning}
N.~Haghtalab, T.~Lykouris, S.~Nietert, and A.~Wei, ``Learning in stackelberg
  games with non-myopic agents,'' in {\em Proceedings of the 23rd ACM
  Conference on Economics and Computation}, pp.~917--918, 2022.

\bibitem{dong2018strategic}
J.~Dong, A.~Roth, Z.~Schutzman, B.~Waggoner, and Z.~S. Wu, ``Strategic
  classification from revealed preferences,'' in {\em Proceedings of the 2018
  ACM Conference on Economics and Computation}, pp.~55--70, 2018.

\bibitem{kleinberg2003value}
R.~Kleinberg and T.~Leighton, ``The value of knowing a demand curve: Bounds on
  regret for online posted-price auctions,'' in {\em 44th Annual IEEE Symposium
  on Foundations of Computer Science, 2003. Proceedings.}, pp.~594--605, IEEE,
  2003.

\bibitem{leobacher2022exception}
G.~Leobacher and A.~Steinicke, ``Exception sets of intrinsic and piecewise
  lipschitz functions,'' {\em The journal of geometric analysis}, vol.~32,
  no.~4, p.~118, 2022.

\bibitem{hoeffding1994probability}
W.~Hoeffding, ``Probability inequalities for sums of bounded random
  variables,'' {\em The collected works of Wassily Hoeffding}, pp.~409--426,
  1994.

\end{thebibliography}

\end{document}